\documentclass[11pt]{article}
\usepackage{cite}
\usepackage{mdframed}
\usepackage{epsfig,subfigure}
\usepackage{amssymb}
\usepackage[fleqn]{amsmath}
\usepackage{color}
\setlength{\oddsidemargin}{0in}
\setlength{\evensidemargin}{0in}
\headheight=0in
\headsep=0in
\setlength{\textwidth}{6.5in}
\textheight=8.7in

\usepackage{arydshln}
\bibliographystyle{IEEEtran}

%
\usepackage{graphicx}
\usepackage{color}
\usepackage[dvipsnames]{xcolor}

\definecolor{armygreen}{rgb}{0.29, 0.33, 0.13}

\usepackage{amssymb}
\usepackage{amsmath,amsfonts,amssymb}
\usepackage{verbatim}
\usepackage{stfloats}
\usepackage[bookmarks=false]{}
\usepackage{algorithm}
\usepackage{algcompatible}

\newtheorem{theorem}{Theorem}

\newtheorem{Lemma}{Lemma}

\newtheorem{defn}[theorem]{Definition}


\usepackage{booktabs}

\newcommand{\Fb}{\mathsf{F}}

\newcommand{\Fc}{\mathcal{F}}

\newcommand{\Qc}{\mathcal{Q}}

\newcommand{\Xc}{\mathcal{X}}


\newcommand\blfootnote[1]{%
  \begingroup
  \renewcommand\thefootnote{}\footnote{#1}%
  \addtocounter{footnote}{-1}%
  \endgroup
}

\begin{document}

\title{Breaking the MDS-PIR Capacity Barrier via Joint Storage Coding
}
\author{Hua Sun and Chao Tian}

\maketitle

\blfootnote{Hua Sun (email: hua.sun@unt.edu) is with the Department of Electrical Engineering at the University of North Texas. Chao Tian (chao.tian@tamu.edu) is with the Department of Electrical and Computer Engineering at the Texas A\&M University. 
}

\begin{abstract}
The capacity of private information retrieval (PIR) from databases coded using maximum distance separable (MDS) codes has been previously characterized by Banawan and Ulukus, where it was assumed that the messages are encoded and stored separably into the databases. This assumption was also usually taken in other related works in the literature, and this capacity is usually referred to as the MDS-PIR capacity colloquially. In this work, we considered the question if and when this capacity barrier can be broken through joint encoding and storing of the messages. Our main results are two classes of novel code constructions which allow joint encoding as well as the corresponding PIR protocols, which indeed outperform the separate MDS-coded systems. Moreover, we show that a simple but novel expansion technique allows us to generalize these two classes of codes, resulting in a wider range of the cases where this capacity barrier can be broken. 
\end{abstract}

\allowdisplaybreaks
\section{Introduction}

Private information retrieval (PIR) \cite{PIRfirstjournal} has attracted significant attention from researchers in the fields of theoretical computer science, cryptography,  information theory, and coding theory. In the classical PIR model, a user wishes to retrieve one of the $K$ available messages, from $N$ non-communicating databases, each of which has a copy of these $K$ messages. User privacy needs to be preserved during the retrieval process, which requires that the identity of the desired message not be revealed to any single database. 
To accomplish the task efficiently, good codes need to be designed such that the least amount of data should be downloaded. The inverse of the minimum amount of the download data per-bit of desired message is referred to as the capacity of the PIR system. The capacity of the classical PIR system was characterized precisely in a recent work by Sun and Jafar \cite{sun2016capacity}. 

In distributed systems, databases may fail; moreover, each storage node (database) is also constrained on the storage space. Erasure codes can be used to improve both storage efficiency and failure resistance, which motivated the investigation of PIR from data encoded with maximum distance separable (MDS) codes  \cite{shah2014one,freij2017private,banawan2018capacity,Tajeddine_Rouayheb,jingke2017subScienceChina}, with coding parameter $(N,T)$, i.e., the messages can be recovered by accessing any $T$ databases. The capacity of PIR from MDS-coded databases (MDS-PIR) was characterized by Banawan and Ulukus \cite{banawan2018capacity}, which is usually referred to as the MDS-PIR capacity colloquially. 

In all these existing works, the storage code has been designed such that each message is independently encoded and stored into the dababases, and thus can also be recovered individually. In fact, even when the storage codes are not necessarily MDS codes, most existing works on private information retrieval have assumed this separate coding architecture \cite{kumar2019achieving,attia2018capacity,woolsey2019optimal,banawan2019capacity,raviv2018private,lin2018fundamental}, and the only exceptions\footnote{{\color{black}While in this work, we focus exclusively on the metric of PIR capacity {\color{black} for MDS codes}, there is another interesting line of work in coding theory \cite{Fazeli_Vardy_Yaakobi, Rao_Vardy, blackburn2019pir, zhang2019private, skachek2018batch, vajha2017binary} that focuses on a different metric - virtual server rate \cite{blackburn2019pir} and studies how to jointly code the messages such that a minimum number of servers is used to simulate existing PIR protocols {\color{black}  and here as the resulting joint storage code is not required to be MDS, it often turns out to be non-MDS}.}} we are aware of are \cite{chan2015private,sun2018multiround,tian2018shannon}. Though this architecture of separately encoding of each message offers a simple storage solution with good data reliability, it is by no means the only possible MDS storage coding strategy. Instead, the messages can be stored jointly using an MDS code, which could provide the same level of data reliability at the same amount of storage overhead. Motivated by this observation\footnote{This observation was first briefly mentioned as a footnote in \cite{sun2018multiround}, which can be further traced back to a simple code example given in \cite{chan2015private}.}, we ask the following natural question: When can the MDS-PIR capacity barrier, which was established in  \cite{banawan2018capacity} for separately encoding of the messages using an MDS code, be broken, by allowing jointly encoding of the messages using an MDS code? 

In this work, we show that there are many cases, where by jointly encoding and storing the messages, the messages can be protected using an $(N,T)$ MDS code, but retrieved with less data download than the separate coding architecture. In other words, the capacity barrier for separately encoding of the messages can be broken for these cases. More precisely, the mathematical question we ask is under what $(K,N,T)$ parameters, jointly encoding and storing the MDS-coded messages can provide strict PIR retrieval rate improvement; we show that this can be done at least in the following two cases:
\begin{itemize}
\item $(K, N, T) = (2, N, 2)$ and $N \geq 3$;
\item $(K, N, T) = (K, K + 1, K)$ and  $K \geq 2$.
\end{itemize} 
To establish this result, we provide two novel code constructions and PIR protocols which yield strict performance improvement over the strategy of encoding and storing messages separately using an MDS code. Moreover, we show that through a simple but novel code expansion technique, the MDS-PIR capacity barrier can also be broken for the following cases for an arbitrary integer $m\geq 1$:
\begin{itemize}
\item $(K, N, T) = (2, mN, 2m)$ and $N \geq 3$;
\item $(K, N, T) = (K, m(K + 1), mK)$ and  $K \geq 2$.
\end{itemize} 

The rest of the paper is organized as follows. In Section \ref{sec:model}, we provide a precise description of the system model and problem formulation. In Section \ref{sec:k2} and Section \ref{sec:kk}, we provide two novel joint coding storage codes and PIR protocols. In Section \ref{sec:expansion} we present a technique which yields more general classes of the codes which can strictly improve upon separately encoding and storing the messages. Section \ref{sec:conclusion} finally concludes the paper.  

\section{System Model and Problem Formulation}
\label{sec:model}

In this section, we first provide a formal description of the system model, then proceed to pose the problem we seek to answer in this work. A couple of additional remarks to clarify the relation between our system model and those seen in the literature are given at the end of the section. 
\subsection{System Model}

There are a total of $K$ mutually independent messages $W^1, W^2, \ldots, W^{K}$ in the system. Each message is uniformly distributed over $\Xc^L$,  i.e., the set of length-$L$ sequences in the finite alphabet $\Xc$. The messages are MDS-coded and then distributed to $N$ databases, such that from any $T$ databases, the messages can be fully recovered. Since the messages are $(N,T)$ MDS-coded, it is without loss of generality to assume that $L\cdot K=M\cdot T$ for some integer $M$.  

When a user wishes to retrieve a particular message $W^{k^*}$, $N$ queries $Q_{1:N}^{[k^*]}=(Q_1^{[k^*]},\ldots,Q_{N}^{[k^*]})$ are sent to the databases, where $Q_n^{[k^*]}$ is the query for database-$n$. The retrieval needs to be information theoretically private, i.e., any database is not able to infer any knowledge as to which message is being requested. For this purpose, a random key $\Fb$ in the set $\Fc$ is used together with the desired message index $k^*$ to generate the set of queries $Q_{1:N}^{[k^*]}$. Each query $Q_n^{[k^*]}$ belongs to the set of allowed queries for database-$n$, denoted as $\Qc_n$. After receiving query $Q_{n}^{[k^*]}$, database-$n$ responds with an answer $A_n^{[k^*]}$.  Each symbol in the answers from database-$n$ belongs to a finite field $\mathcal{A}_n$, and the answers may have multiple (and different numbers of) symbols. Using the answers $A^{[k^*]}_{1:N}$ from all $N$ databases, together with $\Fb$ and $k^*$, the user then reconstructs $\hat{W}^{k^*}$. We shall refer to such a system as a $(K,N,T)$ MDS-PIR system.

A more rigorous definition of a $(K,N,T)$ system can be specified by a set of coding functions as follows. In the following, we denote the cardinality of a set $\mathcal{B}$ as $|\mathcal{B}|$. 
\begin{defn}\label{def:fun}
A $(K,N,T)$ MDS-PIR code consists of the following coding components: 
\begin{enumerate}
\item A set of MDS encoding functions:
\begin{eqnarray}
\Phi_n :=\Xc^{LK}\rightarrow \Xc^{M}, \, n \in \{1,\ldots,N\}, 
\end{eqnarray}
where each $\Phi_n$ encodes all the messages together into the information to be stored at database-$n$.

\item A set of MDS decoding recovery functions: 
\begin{eqnarray}
\Psi_{\mathcal{T}}: \Xc^{LK} \rightarrow \Xc^{LK}, 
\end{eqnarray}
for each $\mathcal{T}\subseteq  \{1,\ldots,N\}$ such that $|\mathcal{T}|=T$, whose outputs are denoted as $\tilde{W}_{\mathcal{T}}^{1:K}$;
\item{A query function
\begin{eqnarray*}
&\phi_n: \{1,\ldots,K\} \times \Fc \rightarrow \Qc_n, \quad n \in \{1,\ldots,N\},
\end{eqnarray*}
i.e., for retrieving message $W^{k^*}$,  the user sends the query $Q_{n}^{[k^*]} = \phi_n(k^*, \Fb)$ to database-$n$;}
\item{An answer length function
\begin{eqnarray}
\ell_n: \Qc_n \rightarrow \{0, 1, \ldots \},~\quad n \in \{1,\ldots,N\},
\end{eqnarray}
i.e., the length of the answer from each database, a non-negative integer, is a deterministic function
of the query, but not the particular realization of the messages;}
\item{
An answer generating function
\begin{eqnarray}
\phi_n^{(q_n)}: \Xc^{M} \times \Qc_n \rightarrow \mathcal{A}_n^{\ell_n},\quad q_n \in \Qc_n,\, n \in \{1,\ldots,N\},
\end{eqnarray}
i.e., the answer when $q_n=Q_n^{[k^*]}$ is the  query received by database-$n$;}
\item{ 
A reconstruction function
\begin{eqnarray}
\psi: \prod_{n=1}^{N} \mathcal{A}_n^{\ell_n} \times \{1,\ldots,K\} \times \Fc \rightarrow \Xc^{L},
\end{eqnarray}
i.e., after receiving the answers, the user reconstructs the message as $\hat{W}^{k^*} = \psi(A_{1:N}^{[k^*]}, k^*, \Fb)$.
}
\end{enumerate}
These functions satisfy the following three requirements:
\begin{enumerate}
    \item {\bf MDS recoverable:} For any $\mathcal{T}\subseteq \{1,\ldots,N\} $ such that $|\mathcal{T}|=T$, we have $ \tilde{W}_{\mathcal{T}}^{1:K}=W^{1:K}$. 
    \item {\bf Retrieval correctness:} For any $k^* \in \{1,\ldots,K\} $, we have $\hat{W}^{k^*} = W^{k^*}$.
    \item {\bf Privacy:} For every $k, k' \in\{1,\ldots,K\} $, $n \in\{1,\ldots,N\} $ and $q \in \Qc_n$, 
    \begin{eqnarray}
    \mathbf{Pr}(Q_n^{[k]} = q) = \mathbf{Pr}(Q_n^{[k']} = q).
    \end{eqnarray}
\end{enumerate}
\end{defn}

The retrieval rate is defined as 
\begin{eqnarray}
R:=\frac{L\log |\Xc|}{ \sum_{n=1}^{N} \mathbb{E} (\ell_n)\log|\mathcal{A}_n| }.\end{eqnarray}
This is the number of bits of desired message information that can be privately retrieved per bit of downloaded data. The maximum possible retrieval rate is referred to as the capacity of the $(K,N,T)$ system.

\subsection{Separate vs. Joint MDS Storage Codes}

In the general problem definition we have provided above, the MDS encoding functions $\Phi_n$ allow the messages to be jointly encoded. 
{\color{black} For example, suppose we have $K=2$ messages, $N=3$ databases and from any $T=2$ databases, we may decode both messages. A simple jointly encoded MDS storage code is as follows. Each message has $L=2$ bits, denoted as $W^1 = (a_1, a_2), W^2 = (b_1, b_2)$. Each database stores $M = LK/T = 2$ bits, i.e., database-1 stores $(a_1,a_2)$, database-2 stores $(b_1,b_2)$ and database-3 stores $(a_1+b_1, a_2+b_2)$.}
However, in almost all existing works in the literature, e.g., \cite{banawan2018capacity,tajeddine2018private,Sun_Jafar_MDSTPIR,Wang_Skoglund,shah2014one,jingke2017subScienceChina,Ruida2019ISIT}, the messages are encoded separately. In other words, the MDS encoding functions have the special form 
\begin{eqnarray}
\Phi_n = (\Phi_n^1,\Phi_n^2,\ldots,\Phi_n^K),
\end{eqnarray}
where
\begin{eqnarray}
\Phi_n^k: \Xc^{L} \rightarrow \Xc^{M/K}, \, n \in \{1,\ldots,N\}, k \in \{1,\ldots,K\},
\end{eqnarray}
which encodes message $W^k$ to its MDS-coded form to be stored at database-$n$. Correspondingly, the MDS decoding functions have the form
\begin{eqnarray}
\Psi_{\mathcal{T}}= (\Psi_{\mathcal{T}}^1,\Psi_{\mathcal{T}}^2,\ldots,\Psi_{\mathcal{T}}^K),
\end{eqnarray}
where
\begin{eqnarray}
\Psi_{\mathcal{T}}^k: \Xc^{L} \rightarrow \Xc^{L}, \quad k \in \{1,\ldots,K\},
\end{eqnarray}
which decodes message-$k$ from the information regarding $W^k$ stored in the databases in the set $\mathcal{T}$. 
Particularly, since most practical MDS codes are linear, several existing works have directly assumed the MDS encoding functions to be linear, and moreover, the component coding functions $\Phi_n^k$ for different messages $W^k$\rq{}s are the same; see e.g.,  \cite{banawan2018capacity,tajeddine2018private}. In other words, in this class of codes, the encoding function 
$\Phi_n^k$ can be written as the multiplication of the message vector $W^k$ with an $L\times M/K$ encoding matrix $G_n$, whose elements are also in the finite field $\Xc$.
{\color{black} To compare with the jointly encoded MDS storage example above, we consider the same setting where $K=2$ messages, $L=2$ bits per message, $N=3$ servers, and the MDS parameter $T=2$. A separate MDS storage code where each database stores $M/K$ = 1 bit per message is as follows. Database-1 stores $(a_1,b_1)$, database-2 stores $(a_2,b_2)$ and database-3 stores $(a_1+a_2, b_1+b_2)$. It is easy to see that for separately encoded MDS storage codes, the storage space is divided evenly to each message and each divided storage space can only be a function of the corresponding message.}

Let us denote the capacity of $(K,N,T)$ MDS-PIR system as $C(K,N,T)$, that of separate MDS coding as $C_{\perp}(K,N,T)$, and that of separate linear MDS coding with a uniform component function as $C_{\oplus}(K,N,T)$. It is clear from the definitions that 
\begin{eqnarray}
C(K,N,T)\geq C_\perp(K,N,T) \geq C_\oplus(K,N,T).
\end{eqnarray} 
It was shown in \cite{banawan2018capacity} that
\begin{eqnarray}
C_\oplus(K,N,T)=\left(1+ \frac{T}{N}+ \cdots + \left( \frac{T}{N} \right)^{K-1} \right)^{-1}. \label{eqn:MDS_C}
\end{eqnarray}
However, a close inspection of the converse proof in \cite{banawan2018capacity} reveals that 
\begin{eqnarray}
C_\perp(K,N,T)=C_\oplus(K,N,T).
\end{eqnarray}

The issue we thus wish to understand in this work is the relation between $C(K,N,T)$ and $C_\perp(K,N,T)$. In particular, we wish to identify the set of the $(K,N,T)$ triples such that 
\begin{eqnarray}
C(K,N,T)>C_\perp(K,N,T),
\end{eqnarray}
if the set is not empty. We shall show in this work that such triples indeed exist, and they in fact span a rather wide range. 

\subsection{Further Remarks on the System Model}

The result in \cite{banawan2018capacity} is in fact slightly stronger than we have stated in (\ref{eqn:MDS_C}). Let us assume a particular MDS storage code $\mathcal{C}$ is used in the $(K,N,T)$ system, then the corresponding capacities of the $(K,N,T)$ systems as described above can be denoted as $C(K,N,T,\mathcal{C})$, $C_\perp(K,N,T,\mathcal{C})$, and $C_\oplus(K,N,T,\mathcal{C})$, respectively. The result in \cite{banawan2018capacity} can then be stated as that for any linear MDS code $\mathcal{C}$, 
\begin{eqnarray}
C_\oplus(K,N,T,\mathcal{C})=C_\oplus(K,N,T)=\left(1+ \frac{T}{N}+ \cdots + \left( \frac{T}{N} \right)^{K-1} \right)^{-1}.
\end{eqnarray}
It is natural to ask whether for any particular MDS code $\mathcal{C}$, which is not necessarily linear or does not necessarily use a uniform component MDS coding function, whether $C_\perp(K,N,T,\mathcal{C})=C_\perp(K,N,T)$, and more generally whether for any MDS code $\mathcal{C}$, $C(K,N,T,\mathcal{C})=C(K,N,T)$. We believe this is in general not true, however, it appears difficult to prove or disprove this conjecture. 

The MDS recovery requirement implies the following information theoretic relation:
\begin{eqnarray}
\sum_{n\in \mathcal{T}}H(\Phi_n(W^{1:K}))=K L\log|\mathcal{X}|,\\
H(W^{1:K}|\Phi_n(W^{1:K}), n\in\mathcal{T})=0,
\end{eqnarray}
for any $\mathcal{T}\subseteq \{1,2,\ldots,N\}$ and $|\mathcal{T}|=T$. These conditions can be used to derive converse results for a $(K,N,T)$ system, and sometimes are stated directly (e.g. \cite{Sun_Jafar_MDSTPIR}) as the MDS recovery requirement, instead of enforcing the MDS recovery property on the coding functions.

\section{Code Construction: $(K,N,T) = (2,N,2), N \geq 3$} \label{sec:k2}
In this section, we present the storage and PIR code construction when $K=T=2, N \geq 3$ and show that the PIR rate achieved with the proposed joint MDS storage code is strictly higher than the capacity of PIR with separate MDS storage code, i.e., $C(2,N,2) > C_\perp(2,N,2)$.
\subsection{Example: $N=4$}\label{sec:ex4}
To illustrate the main idea in a simpler setting, we start with an example where $N=4$. We set message size $L = 3$ so that each message consists of $3$ symbols from $\mathbb{F}_3$. Denote $W^1 = (a_0; a_1; a_2) \in \mathbb{F}_3^{3\times 1}, W^2 = (b_0; b_1; b_2) \in \mathbb{F}_3^{3 \times 1}$.

{\bf Storage Code:} From the joint MDS storage code constraint, each database stores $\frac{LK}{T} = 3$ symbols, and the stored variables are specified in the following table.
\vspace{-0.15in}

\begin{table}[H]
\caption{Stored Variables.}
\vspace{0.05in}
\centering
\begin{tabular}{cccc}
\toprule
\textbf{Database-1}	& \textbf{Database-2}	& \textbf{Database-3} & \textbf{Database-4} \\
\midrule
$a_0$		& $b_0$			& $a_1+b_0$		&$2a_2+b_0$\\
$a_1$		& $b_1$			& $a_2+b_1$		&$2a_0+b_1$\\
$a_2$		& $b_2$			& $a_0+b_2$		&$2a_1+b_2$\\
\bottomrule
\end{tabular}
\end{table}

It is easy to verify that we may recover both messages from the storage of any 2 databases. For example, consider database-3 and database-4. It suffices to show that $(a_1 - 2a_2; a_2 - 2a_0; a_0 - 2a_1)$ are invertible to $W^1 = (a_0; a_1 ; a_2)$. Equivalently, we show that the following matrix has full rank over $\mathbb{F}_3$.
\begin{eqnarray}
\left[\begin{array}{ccc}
0 & 1 & -2 \\
-2 & 0 & 1 \\
1 & -2 & 0
\end{array}
\right] \rightarrow 
\left[\begin{array}{ccc}
0 & 1 & 1 \\
1 & 0 & 1 \\
1 & 1 & 0
\end{array}
\right] \rightarrow 
\det \left[\begin{array}{ccc}
0 & 1 & 1 \\
1 & 0 & 1 \\
1 & 1 & 0
\end{array}
\right]
= 2 \neq 0
\end{eqnarray}

{\bf PIR Code:} When we retrieve $W^1$, the answers are shown in the following table.
\vspace{-0.15in}

\begin{table}[H]
\caption{Answers for $W^1$.}
\vspace{0.05in}
\centering
\begin{tabular}{ccccc}
\toprule
$\mathsf{F}$ & \textbf{Database-1}	& \textbf{Database-2}	& \textbf{Database-3} & \textbf{Database-4} \\
\midrule
0& $a_0$		& $b_0$			& $a_1+b_0$		&$2a_2+b_0$\\
1& $a_1$		& $b_1$			& $a_2+b_1$		&$2a_0+b_1$\\
2& $a_2$		& $b_2$			& $a_0+b_2$		&$2a_1+b_2$\\
\bottomrule
\end{tabular}
\end{table}

When we retrieve $W^2$, the answers are shown in the following table.
\vspace{-0.15in}

\begin{table}[H]
\caption{Answers for $W^2$.}
\vspace{0.05in}
\centering
\begin{tabular}{ccccc}
\toprule
$\mathsf{F}$ & \textbf{Database-1}	& \textbf{Database-2}	& \textbf{Database-3} & \textbf{Database-4} \\
\midrule
0& $a_0$		& $b_0$			& $a_0+b_2$		&$2a_0+b_1$\\
1& $a_1$		& $b_1$			& $a_1+b_0$		&$2a_1+b_2$\\
2& $a_2$		& $b_2$			& $a_2+b_1$		&$2a_2+b_0$\\
\bottomrule
\end{tabular}
\end{table}

{\bf Correctness and Privacy:} Both correctness and privacy are easy to verify. Correctness follows from the observation that from the 4 symbols downloaded (one from each database), we may decode the 3 desired symbols as only 1 undesired symbol appears in the answers. Privacy is guaranteed because no matter which message is desired, for each database, the answers are identically distributed. For example, consider database-3. The answers are equally likely to be $a_0+b_2, a_1+b_0$ and $a_2+b_1$, regardless of the desired message index.

{\bf Rate that outperforms separate MDS-PIR capacity:} The desired message has $L=3$ symbols and we are downloading one symbol from each database, $l_n = 1, \forall n \in \{1,2,3,4\}$. Then the rate achieved is $\frac{L}{\sum_n l_n} = \frac{3}{4} \leq C(2,4,2)$, which is strictly higher than $C_\perp{(2,4,2)} = (1+\frac{2}{4})^{-1} = \frac{2}{3}$, the capacity of separate MDS storage code.

\subsection{General Proof: Arbitrary $N \geq 3$}
We set message size $L = N-1$, then each message consists of $N-1$ symbols from $\mathbb{F}_{p^m}$ for a prime number $p$ and an integer $m$ such that $p^m \geq (N-3)(N-1)+2$. The primitive element of the finite filed $\mathbb{F}_{p^m}$ is denoted as $\alpha$.
Denote $W^1 = (a_0; a_1; \cdots; a_{N-2}) \in \mathbb{F}_{p^m}^{(N-1) \times 1}, W^2 = (b_0;b_1; \cdots; b_{N-2})\in \mathbb{F}_{p^m}^{(N-1) \times 1}$.

{\bf Storage Code:} From the joint MDS storage code constraint, each database stores $\frac{LK}{T} = N-1$ symbols, and the stored variables $S_n \in \mathbb{F}_{p^m}^{(N-1) \times 1}, n \in \{1,\cdots,N\}$ are set as follows.

Denote the cyclicly shifted message vector as $\tilde{W}^1(i) = (a_{\overline{i}}; a_{\overline{i+1}}; \cdots; a_{\overline{i+N-2}}), i \in \{1,\cdots,N-2\}$ where $\overline{i} = i \mod (N-1)$, i.e., the symbol indices are interpreted modulo $N-1$.
\begin{eqnarray}
S_1 &=& W^1 \\
S_2 &=& W^2 \\
S_3 &=& \alpha \tilde{W}^1(1) + W^2 \\
&\vdots& \\
S_n &=& \alpha^{n-2} \tilde{W}^1(n-2) + W^2\\
&\vdots& \\
S_N &=& \alpha^{N-2} \tilde{W}^1(N-2) + W^2
\end{eqnarray}
Specifically,
\begin{eqnarray}
&& S_1 = (S_{1,0}; \cdots; S_{1,N-2}) = (a_0; \cdots; a_{N-2}) \\
&& S_2 = (S_{2,0}; \cdots; S_{2,N-2}) = (b_0; \cdots; b_{N-2}) \\
&& S_n = (S_{n,0}; \cdots; S_{n,N-2}) = (\alpha^{n-2} a_{\overline{n-2}} + b_{0} ; \cdots; \alpha^{n-2} a_{\overline{n+N-4}} + b_{N-2}) , n \in \{3,\cdots,N\}
\end{eqnarray}

The proof that the above storage code satisfies the MDS criterion is deferred to Section \ref{sec:mdsproof}.

{\bf PIR Code:} When we retrieve $W^1$, the answers are set as follows. $\mathsf{F}$ is uniformly distributed over $\{0,1,\cdots,N-2\}$. When $\mathsf{F} = f \in \{0,1,\cdots, N-2\}$, we set
\begin{eqnarray}
 && A_1^{[1]} = S_{1,f}  = a_f \\
&& A_2^{[1]} = S_{2,f} = b_f \\
&& A_3^{[1]} = S_{3,f} = \alpha a_{\overline{f+1}} + b_f \\
&& ~~~~~~\vdots \\
&& A_n^{[1]} =  S_{n,f} =  \alpha^{n-2} a_{\overline{f+n-2}} + b_f \\
&& ~~~~~~\vdots \\
&& A_N^{[1]} = S_{N,f} = \alpha^{N-2} a_{\overline{f+N-2}} + b_f 
\end{eqnarray}

When we retrieve $W^2$, the answers are set as follows. $\mathsf{F}$ is uniformly distributed over $\{0,1,\cdots,N-2\}$. When $\mathsf{F} = f \in \{0,1,\cdots, N-2\}$, we set
\begin{eqnarray}
 && A_1^{[2]} = S_{1,f} = a_f \\
&& A_2^{[2]} = S_{2,f} = b_f \\
&& A_3^{[2]} = S_{3,\overline{f-1}} = \alpha a_{f} + b_{\overline{f-1}} \\
&& ~~~~~~\vdots \\
&& A_n^{[2]} = S_{n,\overline{f-(n-2)}} =  \alpha^{n-2} a_{f} + b_{\overline{f-(n-2)}} \\
&& ~~~~~~\vdots \\
&& A_N^{[2]} = S_{N,\overline{f-(N-2)}} = \alpha^{N-2} a_{f} + b_{\overline{f-(N-2)}} 
\end{eqnarray}

{\bf Correctness and Privacy:} Similar to the example presented in the previous section, both correctness and privacy are easy to verify. Correctness follows from the observation that the $N$ symbols downloaded (one from each database) contain all $N-1$ desired symbols and only 1 undesired symbol. Specifically, when $W^1$ is desired, we may recover $W^1$ from $(A_1^{[1]}, A_3^{[1]} - A_2^{[1]}, \cdots, A_N^{[1]} - A_2^{[1]})$ and when $W^2$ is desired, we may recover $W^2$ from $(A_2^{[2]}, A_3^{[2]} - \alpha A_1^{[2]}, \cdots, A_N^{[2]} - \alpha^{N-2} A_1^{[2]})$. Privacy is guaranteed because no matter which message is desired, $A_n^{[1]}$ and $A_n^{[2]}$ are identically distributed. For $n=1, 2$, this is trivial to see; when $n\geq 3$, since $A_n^{[1]} = S_{n,f}$, $A_n^{[2]} = S_{n,\overline{f-(n-2)}}$ and $f \in \{0,1,\cdots, N-2\}$, it is seen that $f$ and $\overline{f-(n-2)} = (f - (n-2)) \mod (N-2)$ take values from the same set $\{0,1,\cdots,N-2\}$ for any $n$, and moreover the queries follow the same uniform distribution on this set for both messages. 

{\bf Rate that outperforms separate MDS-PIR capacity:} The desired message has $L=N-1$ symbols and we are downloading one symbol from each database, $l_n = 1, \forall n \in \{1,\cdots,N\}$. Then the rate achieved is $\frac{L}{\sum_n l_n} = \frac{N-1}{N} \leq C(2,N,2)$. When $N \geq 3$,  $C(2,N,2) \geq \frac{N-1}{N} > \frac{N}{N+2} = C_\perp (2,N,2)$, the capacity of separate MDS storage code.

\subsubsection{Proof of MDS storage criterion} \label{sec:mdsproof}
We show that from the stored variables of any two databases, $S_i, S_j, i < j, i, j \in \{1,\cdots, N\}$ we may recover both $W^1$ and $W^2$.

When $i = 1, 2$, the proof is immediate. Henceforth we consider $i \geq 3$. To show that from $(S_i, S_j)$ we may recover $(W^1, W^2)$, it suffices to prove that from $S_i - S_j$, we may recover $W^1$. Note that
\begin{eqnarray}
S_i - S_j &=&  \alpha^{i-2} \tilde{W}^1(i-2)  -  \alpha^{j-2} \tilde{W}^1(j-2) \\
&=& (\alpha^{i-2}a_{\overline{i-2}} - \alpha^{j-2}a_{\overline{j-2}}; \cdots; \alpha^{i-2}a_{\overline{i+N-4}} - \alpha^{j-2}a_{\overline{j+N-4}}) \\
&=& {\bf C}_{i,j} (a_0; \cdots; a_{N-2})
\end{eqnarray}
where ${\bf C}_{i,j}$ is an $(N-1) \times (N-1)$ circulant matrix whose rows consist of all possible cyclic shifts of the following $1\times(N-1)$ row vector,
\begin{eqnarray}
{\bf c} = (c_0, c_1, \cdots, c_{N-2}) = (\alpha^{i-2}, \underbrace{0, \cdots, 0}_{j-i-1~0's}, -\alpha^{j-2}, 0, \cdots, 0).
\end{eqnarray} 
We are left to prove the circulant matrix ${\bf C}_{i,j}$ has full rank. From a result by Ingleton \cite{ingleton1956rank}, a circulant matrix has full rank if the following two polynomials have no common root.
\begin{eqnarray}
f(x) &=& c_0 + c_1 x + \cdots c_{N-2} x^{N-2} = \alpha^{i-2} - \alpha^{j-2} x^{j-i},\\
g(x) &=& x^{N-1} - 1.
\end{eqnarray}
To show that $f(x), g(x)$ have no common root for all integers $i,j, 3\leq i < j \leq N$, we prove by contradiction. Suppose on the contrary that there exists an element $x_0 \in \mathbb{F}_{p^m}$ and two integers $i,j, 3\leq i < j \leq N$ such that $f(x_0) = 0$ and $g(x_0) = 0$, i.e.,
\begin{eqnarray}
\alpha^{i-2} &=& \alpha^{j-2} {x_0}^{j-i} \label{eq:p1}\\
x_0^{N-1}&=& 1 \label{eq:p2}
\end{eqnarray}
Taking (\ref{eq:p1}) to the $(N-1)$-th power, we have
\begin{eqnarray}
\alpha^{(i-2)(N-1)} &=& \alpha^{(j-2)(N-1)} ({x_0}^{N-1})^{j-i} \\
\overset{(\ref{eq:p2})}{\Rightarrow} \alpha^{(i-2)(N-1)} &=& \alpha^{(j-2)(N-1)}\\
\Rightarrow~~~~~~~~~~~~~~ 1 &=& \alpha^{(j-i)(N-1)} \label{eq:p3}
\end{eqnarray}
Note that $(j - i)(N-1) \leq (N-3)(N-1)$. Combining with the assumption that $p^m -2 \geq (N-3)(N-1)$ and $\alpha$ is a primitive element of $\mathbb{F}_{p^m}$, we have \cite{lidl1994introduction}
\begin{eqnarray}
1 \notin \{\alpha, \alpha^2, \alpha^3, \cdots, \alpha^{p^m-2} \} \Rightarrow
\alpha^{(j-i)(N-1)} \neq 1,
\end{eqnarray}
which contradicts (\ref{eq:p3}). The proof is now complete.

{\it Remark: The field size may be further reduced by a result from \cite{guan1986exact}. To ensure ${\bf C}_{i,j}$ has full rank, it suffices to ensure $f(x)$ and $g'(x) = x^r -1$ has no common root, where $N-1 = r p^l$ and $p,r$ are co-prime \cite{guan1986exact}. Using this result and following similar proof steps as above, we may set $p^m \geq (N-3) r + 2$. Note that here $r$ depends on $p$, so to find the smallest field size, we may search by first fixing $p$.}

\section{Code Construction: $(K,N,T) = (K, K+1, K), K \geq 2$}\label{sec:kk}
In this section, we present the storage and PIR code construction when $N=K+1=T+1$ and show that the PIR rate achieved with the proposed joint MDS storage code is strictly higher than the capacity of PIR with separate MDS storage code, i.e., $C(K, K+1, K) > C_\perp(K, K+1, K)$.
\subsection{Example: $(K,N,T)=(3,4,3)$}
To illustrate the main idea in a simpler setting, we start with an example where $K=3, N=4, T=3$. We set message size $L = 2$ so that each message consists of $2$ bits from $\mathbb{F}_2$. Denote $W^1 = (a_1; a_2), W^2 = (b_1; b_2), W^3 = (c_1;c_2)$.

{\bf Storage Code:} From the joint MDS storage code constraint, each database stores $\frac{LK}{T} = 2$ bits, and the stored variables are specified in the following table.
\vspace{-0.15in}

\begin{table}[H]
\caption{Stored Variables.}
\vspace{0.05in}
\centering
\begin{tabular}{cccc}
\toprule
\textbf{Database-1}	& \textbf{Database-2}	& \textbf{Database-3} & \textbf{Database-4} \\
\midrule
$a_1$		& $b_1$			& $c_1$		&$a_1+b_1+c_1$\\
$a_2$		& $b_2$			& $c_2$		&$a_2+b_2+c_2$\\
\bottomrule
\end{tabular}
\end{table}

The MDS storage criterion is easily verified, i.e., we may recover both messages from the storage of any 3 databases. 

{\bf PIR Code:} When we retrieve $W^1$, the answers are shown in the following table.
\vspace{-0.15in}

\begin{table}[H]
\caption{Answers for $W^1$.}
\vspace{0.05in}
\centering
\begin{tabular}{ccccc}
\toprule
$\mathsf{F}$ & \textbf{Database-1}	& \textbf{Database-2}	& \textbf{Database-3} & \textbf{Database-4} \\
\midrule
1& $a_2$		& $b_1$			& $c_1$		&$a_1+b_1+c_1$\\
2& $a_1$		& $b_2$			& $c_2$		&$a_2+b_2+c_2$\\
\bottomrule
\end{tabular}
\end{table}

When we retrieve $W^2$ or $W^3$, the answers are shown in the following tables.
\vspace{-0.15in}
\begin{table}[H]
\caption{Answers for $W^2$.}
\vspace{0.05in}
\centering
\begin{tabular}{ccccc}
\toprule
$\mathsf{F}$ & \textbf{Database-1}	& \textbf{Database-2}	& \textbf{Database-3} & \textbf{Database-4} \\
\midrule
1& $a_1$		& $b_2$			& $c_1$		&$a_1+b_1+c_1$\\
2& $a_2$		& $b_1$			& $c_2$		&$a_2+b_2+c_2$\\
\bottomrule
\end{tabular}
\end{table}

\begin{table}[H]
\caption{Answers for $W^3$.}
\vspace{0.05in}
\centering
\begin{tabular}{ccccc}
\toprule
$\mathsf{F}$ & \textbf{Database-1}	& \textbf{Database-2}	& \textbf{Database-3} & \textbf{Database-4} \\
\midrule
1& $a_1$		& $b_1$			& $c_2$		&$a_1+b_1+c_1$\\
2& $a_2$		& $b_2$			& $c_1$		&$a_2+b_2+c_2$\\
\bottomrule
\end{tabular}
\end{table}

{\bf Correctness and Privacy:} Both correctness and privacy are easy to see. 

{\bf Rate that outperforms separate MDS-PIR capacity:} The rate achieved is $\frac{L}{\sum_n l_n} = \frac{2}{4} = \frac{1}{2} \leq C(3,4,3)$, which is strictly higher than $C_\perp(3,4,3) = (1+\frac{3}{4} + (\frac{3}{4})^2)^{-1} = \frac{16}{37}$, the capacity of separate MDS storage code.

\subsection{General Proof: $(K,N,T) = (K, K+1, K), K \geq 2$}
The proof is a simple generalization of the example presented above. We set $L = 2$, and each message consists of $2$ bits from $\mathbb{F}_2$. Denote $W^k = (W^{k}_1; W^{k}_2), k \in \{1,\cdots,K\}$.

{\bf Storage Code:} Each database stores $\frac{LK}{T} = 2$ bits, and the stored variables are specified in the following table. Note that $K=T=N-1$.
\vspace{-0.15in}

\begin{table}[H]
\caption{Stored Variables.}
\vspace{0.05in}
\centering
\begin{tabular}{ccccc}
\toprule
\textbf{Database-1}	& \textbf{Database-2}	& $\cdots$ & \textbf{Database-$(N-1)$} & \textbf{Database-$N$} \\
\midrule
$W^{1}_1$		& $W^{2}_1$			& $\cdots$		&$W^{K}_1$ &$\sum_{k=1}^K W^{k}_1$\\
$W^{1}_2$		& $W^{2}_2$			& $\cdots$		&$W^{K}_2$  &$\sum_{k=1}^K W^{k}_2$\\
\bottomrule
\end{tabular}
\end{table}

The MDS storage criterion is easily verified, i.e., we may recover both messages from the storage of any $T=N-1$ databases. 

{\bf PIR Code:} When we retrieve $W^k$, the answers are shown in the following table.
\vspace{-0.15in}

\begin{table}[H]
\caption{Answers for $W^k$.}
\vspace{0.05in}
\centering
\begin{tabular}{cccccccc}
\toprule
$\mathsf{F}$ & \textbf{Database-1}	&$\cdots$  & \textbf{Database-$k$}	&$\cdots$ & \textbf{Database-$(N-1)$} & \textbf{Database-$N$} \\
\midrule
1& $W^{1}_1$				&$\cdots$	& $W^{k}_2$	&$\cdots$	&$W^{K}_1$ &$\sum_{k=1}^K W^{k}_1$\\
2& $W^{1}_2$				&$\cdots$	& $W^{k}_1$	&$\cdots$	&$W^{K}_2$ &$\sum_{k=1}^K W^{k}_2$\\
\bottomrule
\end{tabular}
\end{table}

{\bf Correctness and Privacy:} Follow immediately.

{\bf Rate that outperforms separate MDS-PIR capacity:} The rate achieved is $\frac{L}{\sum_n l_n} = \frac{2}{N} \leq C(K,K+1,K)$, while the capacity of separate MDS storage code is $C_\perp (K, K+1,K) = (1+\frac{N-1}{N} + \cdots + (\frac{N-1}{N})^{N-1} )^{-1} = \frac{1-\frac{N-1}{N}}{1-(\frac{N-1}{N})^{N}} = \frac{1}{N(1 - (\frac{N-1}{N})^N)}$. To prove $C(K,K+1,K) > C_\perp(K,K+1,K)$, it remains to show that
\begin{eqnarray}
\frac{2}{N} &>& \frac{1}{N(1 - (\frac{N-1}{N})^N)} \\
\Leftrightarrow (1-\frac{1}{N})^N &<& \frac{1}{2} \\
\Leftarrow (1-\frac{1}{N})^N &\leq& \frac{1}{e} < \frac{1}{2}.
\end{eqnarray}
The proof is thus complete.

\section{Regime Expansion Building upon Base Codes}
\label{sec:expansion}
We show that the two classes of base codes presented in previous sections for $(K,N,T)$ systems can be extended to $(K, mN, mT)$ systems ($m$ is a positive integer). 
We present this result in the next two subsections, one for each class of base codes. Let us start from the simpler case of $(K,K+1,K)$ systems.

\subsection{From $(K,K+1,K)$ to $(K,m(K+1),mK)$ Systems}
We show that $C(K, m(K+1), mK) > C_\perp(K,m(K+1), mK)$, where $K \geq 2$ and $m$ is a positive integer. 

The key idea is that we may split the messages and databases into $m$ generic copies so that the same PIR rate is preserved. 
Note that the separate MDS-PIR capacity is a function of $\frac{T}{N}$, i.e., $C_\perp(K,m(K+1),mK) = C_\perp(K, K+1,K)$. As $C(K,K+1,K) > C_\perp(K,K+1,K)$, it suffices to provide a joint MDS storage code for a $(K, m(K+1), mK)$ system that achieves the same PIR rate as that of a $(K, K+1, K)$ system (i.e., rate $\frac{2}{K+1}$). Such a storage and PIR code construction is presented next.

Each message is ``multiplied'' by $m$ so that we set $L = 2m$, and each message consists of $2m$ symbols from $\mathbb{F}_q$, where $q$ is an integer power of a prime number and is no fewer than $(m+1)K $. To highlight that the message symbols form two segments, we denote $W^k = ({\bf W}^k_1; {\bf W}^k_2) \in \mathbb{F}_q^{2 \times m}$, where ${\bf W}^k_i = (W^k_{i,1}, \cdots, W^k_{i,m}) \in \mathbb{F}_q^{1\times m}, i \in \{1,2\}$. 

{\bf Storage Code:} Each database stores $\frac{LK}{T} = 2$ symbols, as specified in the following table. For ease of presentation, the $N=m(K+1)$ databases are divided into $K+1$ groups ($m$ databases each) and labelled as ${DB(1,1)}, \cdots, {DB(1,m)}, \cdots, {DB}(K+1,m)$. Denote a group of databases as ${\bf DB}(k, :)$ $= \big(DB(k,1), \cdots, DB(k,m)\big), k \in \{1,2,\cdots,K+1\}$. A database in group $k, k\in\{1,\cdots,K\}$ stores 2 distinct $W^k$ symbols (one from ${\bf W}^k_1$ and one from ${\bf W}^k_2$). The $(K+1)$-th group of databases store generic combinations of the message symbols. Denote ${\bf W}_1 = (({\bf W}^1_1)^{T} ; \cdots; ({\bf W}^K_1)^{T}) \in \mathbb{F}_q^{mK\times 1}, {\bf W}_2 = ( ({\bf W}^1_2)^{T} ; \cdots; ({\bf W}^K_2)^{T} ) \in \mathbb{F}_q^{mK\times 1}$. ${\bf C}(i,:) \in \mathbb{F}_q^{1\times mK}, i \in \{1,\cdots,m\}$ denotes the $i$-th row of an $m \times mK$ Cauchy matrix ${\bf C}$ with elements ${\bf C}(i,j)$ in the form
\begin{eqnarray}
{\bf C}(i,j) = \frac{1}{\alpha_i - \beta_j},  \alpha_i \neq \beta_j, \forall i \in \{1,\cdots,m\}, j \in \{1,\cdots,mK\}.
\end{eqnarray}
Note that $q \geq (m+1)K$, therefore, such distinct $\alpha_i$'s and $\beta_j$'s exist.

\vspace{-0.15in}
\begin{table}[H]
\caption{Stored Variables.}
\vspace{0.05in}
\centering
\begin{tabular}{cccccccc}
\toprule
\textbf{${\bf DB}(1, :)$}	& \textbf{${\bf DB}(2, :)$}	& $\cdots$ &  \textbf{${\bf DB}(K, :)$} & \textbf{${\bf DB}(K+1, 1)$} & $\cdots$ & \textbf{${\bf DB}(K+1, m)$} \\
\midrule
${\bf W}^{1}_1$		& ${\bf W}^{2}_1$			& $\cdots$		&${\bf W}^{K}_1$ &${\bf C}(1,:) {\bf W}_1$ & $\cdots$ & ${\bf C}(m,:) {\bf W}_1$\\
${\bf W}^{1}_2$		& ${\bf W}^{2}_2$			& $\cdots$		&${\bf W}^{K}_2$  &${\bf C}(1,:) {\bf W}_2$ & $\cdots$ & ${\bf C}(m,:) {\bf W}_2$\\
\bottomrule
\end{tabular}
\end{table}

We now verify that the MDS storage criterion is satisfied, i.e., both messages can be recovered from the storage of any $T=mK$ databases. The two message segments ${\bf W}_1, {\bf W}_2$ are encoded in the same manner, so it suffices to consider one segment, say segment $1, {\bf W}_1$. Suppose among the $T=mK$ databases, $T_1 \leq (m-1)K$ databases are from the first $K$ database groups and the remaining $T - T_1$ databases are from the $(K+1)$-th database group. The $T_1$ databases from the first $K$ database groups contribute $T_1$ raw message symbols from ${\bf W}_1$, then we only need to show that the remaining $T - T_1$ symbols from ${\bf W}_1$ can be recovered from the $T - T_1$ databases of the $(K+1)$-th database group. This is equivalent to prove that a $(T - T_1) \times (T - T_1)$ sub-matrix of the Cauchy matrix ${\bf C} \in \mathbb{F}_q^{m\times mK}$ has full rank, which trivially holds for any Cauchy matrix.

{\bf PIR Code:} When we retrieve $W^k$, the answers are shown in the following table.
\vspace{-0.15in}
\begin{table}[H]
\caption{Answers for $W^k$.}
\vspace{0.05in}
\centering
\begin{tabular}{cccccccccc}
\toprule
$\mathsf{F}$& \textbf{${\bf DB}(1, :)$}	&$\cdots$& \textbf{${\bf DB}(k, :)$}	& $\cdots$ &  \textbf{${\bf DB}(K, :)$} & \textbf{${\bf DB}(K+1, 1)$} & $\cdots$ & \textbf{${\bf DB}(K+1, m)$} \\
\midrule
1 & ${\bf W}^{1}_1$		&$\cdots$ & ${\bf W}^{k}_2$			& $\cdots$		&${\bf W}^{K}_1$ &${\bf C}(1,:) {\bf W}_1$ & $\cdots$ & ${\bf C}(m,:) {\bf W}_1$\\
2 & ${\bf W}^{1}_2$		&$\cdots$& ${\bf W}^{k}_1$			& $\cdots$		&${\bf W}^{K}_2$  &${\bf C}(1,:) {\bf W}_2$ & $\cdots$ & ${\bf C}(m,:) {\bf W}_2$\\
\bottomrule
\end{tabular}
\end{table}

{\bf Correctness and Privacy:} Privacy follows from the observation that no matter which message is desired, the answer from any database is equally likely to come from message segment 1 or 2. To see correctness, note that all non-desired message symbols appeared in answers from the $(K+1)$-th database group are directly downloaded thus can be cancelled. $m$ desired symbols are directly downloaded and the other $m$ desired symbols can be successfully recovered because the $m$ linear combinations of desired symbols downloaded from the $(K+1)$-th database group have full rank (note that ${\bf C} \in \mathbb{F}_q^{m\times mK}$ is a Cauchy matrix). The rate achieved is $\frac{2}{K+1}$ as $L = 2m$ and we have downloaded one symbol from each of the $m(K+1)$ databases.

\subsection{From $(2,N,2)$ to $(2, mN, 2m)$ Systems}

We show that $C(2,mN,2m) > C_\perp(2,mN,2m)$, where $N\geq 3$ and $m$ is a positive integer. Similar to the reasoning in the previous section, it suffices to provide a joint MDS storage code for a $(2, mN, 2m)$ system that achieves the PIR rate $\frac{N-1}{N}$ (same as that of a $(2,N,2)$ system from Section \ref{sec:kk}).
The idea is also based on splitting the messages and databases. Let us start with an example where $N=4, m=2$.
\subsubsection{Example: $N=4, m=2$}
The message size is multiplied by $m=2$ so that we set $L = m(N-1) = 6$ and each message consists of $6$ symbols from $\mathbb{F}_q$, where $q$ will be specified later. At this point, it is useful to view $q$ as a sufficiently large prime number. Denote $W^1 = ({\bf a}_0; {\bf a}_1; {\bf a}_2)$, where ${\bf a}_i = (a_i; a_i'), i \in \{0,1,2\}$ and $W^2 = ({\bf b}_0; {\bf b}_1; {\bf b}_2)$, where ${\bf b}_i = (b_i; b_i'), i \in \{0,1,2\}$.

{\bf Storage Code:} Each database stores $\frac{LK}{T} = 3$ symbols, as specified in the following table. Define
\begin{eqnarray}
{\bf h}_i = (h_i, h_i') \in \mathbb{F}_q^{1\times 2}, {\bf g}_i = (g_i, g_i') \in \mathbb{F}_q^{1\times 2}, i \in \{1,2,\cdots,12\}.
\end{eqnarray}
We will show that there exist feasible choices of ${\bf h}_i, {\bf g}_i$. Specifically, we may choose $h_i, h_i', g_i, g_i'$ i.i.d. and uniform over $\mathbb{F}_q$.

\vspace{-0.15in}
\begin{table}[H]
\caption{Stored Variables.}
\vspace{0.05in}
\centering
\begin{tabular}{ccccccccc}
\toprule
\textbf{\scriptsize(DB1, DB2)}	 & \textbf{\scriptsize(DB3, DB4)} & \textbf{\scriptsize(DB5, DB6)} & \textbf{\scriptsize(DB7, DB8)} \\
\midrule
$(a_0,a_0')$		 & $(b_0,b_0')$		& $({\bf h}_1 {\bf a}_1 + {\bf g}_1 {\bf b}_0,{\bf h}_2 {\bf a}_1 + {\bf g}_2 {\bf b}_0)$ &$({\bf h}_7 {\bf a}_2 + {\bf g}_7 {\bf b}_0,{\bf h}_8 {\bf a}_2 + {\bf g}_8 {\bf b}_0)$\\
$(a_1,a_1')$		 & $(b_1,b_1')$		& $({\bf h}_3 {\bf a}_2 + {\bf g}_3 {\bf b}_1,{\bf h}_4 {\bf a}_2 + {\bf g}_4 {\bf b}_1)$ &$({\bf h}_9 {\bf a}_0 + {\bf g}_9 {\bf b}_1,{\bf h}_{10} {\bf a}_0 + {\bf g}_{10} {\bf b}_1)$\\
$(a_2,a_2')$		& $(b_2,b_2')$		& $({\bf h}_5 {\bf a}_0 + {\bf g}_5 {\bf b}_2,{\bf h}_6 {\bf a}_0 + {\bf g}_6 {\bf b}_2)$		&$({\bf h}_{11} {\bf a}_1 + {\bf g}_{11} {\bf b}_2,{\bf h}_{12} {\bf a}_1 + {\bf g}_{12} {\bf b}_2)$ \\
\bottomrule
\end{tabular}
\end{table}

To verify the MDS storage criterion, we need to show that both messages can be recovered from the storage of any $4$ databases. The detailed proof is deferred to the general proof presented in the next section and we give a sketch here. Every 4 databases contribute 12 linear combinations on the 12 message symbols and this linear mapping is given by a $12\times 12$ matrix. We view its determinant polynomial as a function of variables $(h_i, h_i', g_i, g_i')$. As shown in the general proof, these determinant polynomials are not zero polynomials. Overall we have $\binom{8}{4}$ determinant polynomials and each polynomial has degree at most $12$. Consider the product of all such determinant polynomials, which is another polynomial with degree at most $12 \times \binom{8}{4}$. Therefore by Schwartz-Zippel lemma, if we set $q > 12 \times \binom{8}{4}$, then the probability that this product polynomial evaluates to 0 is non-zero. In other words, we have found a feasible choice of $(h_i, h_i', g_i, g_i')$ that guarantees the storage code satisfies the MDS criterion.

{\bf PIR Code:} The PIR code is almost identical to that when $m=1$. When we retrieve $W^1$, the answers are shown in the following table.
\vspace{-0.15in}
\begin{table}[H]
\caption{Answers for $W^1$.}
\vspace{0.05in}
\centering
\begin{tabular}{ccccccccc}
\toprule
\textbf{\scriptsize(DB1, DB2)}	 & \textbf{\scriptsize(DB3, DB4)} & \textbf{\scriptsize(DB5, DB6)} & \textbf{\scriptsize(DB7, DB8)} \\
\midrule
$(a_0,a_0')$		 & $(b_0,b_0')$		& $({\bf h}_1 {\bf a}_1 + {\bf g}_1 {\bf b}_0,{\bf h}_2 {\bf a}_1 + {\bf g}_2 {\bf b}_0)$ &$({\bf h}_7 {\bf a}_2 + {\bf g}_7 {\bf b}_0,{\bf h}_8 {\bf a}_2 + {\bf g}_8 {\bf b}_0)$\\
$(a_1,a_1')$		 & $(b_1,b_1')$		& $({\bf h}_3 {\bf a}_2 + {\bf g}_3 {\bf b}_1,{\bf h}_4 {\bf a}_2 + {\bf g}_4 {\bf b}_1)$ &$({\bf h}_9 {\bf a}_0 + {\bf g}_9 {\bf b}_1,{\bf h}_{10} {\bf a}_0 + {\bf g}_{10} {\bf b}_1)$\\
$(a_2,a_2')$		& $(b_2,b_2')$		& $({\bf h}_5 {\bf a}_0 + {\bf g}_5 {\bf b}_2,{\bf h}_6 {\bf a}_0 + {\bf g}_6 {\bf b}_2)$		&$({\bf h}_{11} {\bf a}_1 + {\bf g}_{11} {\bf b}_2,{\bf h}_{12} {\bf a}_1 + {\bf g}_{12} {\bf b}_2)$ \\
\bottomrule
\end{tabular}
\end{table}

When we retrieve $W^2$, the answers are shown in the following table.
\vspace{-0.15in}
\begin{table}[H]
\caption{Answers for $W^2$.}
\vspace{0.05in}
\centering
\begin{tabular}{ccccccccc}
\toprule
\textbf{\scriptsize(DB1, DB2)}	 & \textbf{\scriptsize(DB3, DB4)} & \textbf{\scriptsize(DB5, DB6)} & \textbf{\scriptsize(DB7, DB8)} \\
\midrule
$(a_0,a_0')$		 & $(b_0,b_0')$		& $({\bf h}_5 {\bf a}_0 + {\bf g}_5 {\bf b}_2,{\bf h}_6 {\bf a}_0 + {\bf g}_6 {\bf b}_2)$&$({\bf h}_9 {\bf a}_0 + {\bf g}_9 {\bf b}_1,{\bf h}_{10} {\bf a}_0 + {\bf g}_{10} {\bf b}_1)$\\
$(a_1,a_1')$		 & $(b_1,b_1')$		& $({\bf h}_1 {\bf a}_1 + {\bf g}_1 {\bf b}_0,{\bf h}_2 {\bf a}_1 + {\bf g}_2 {\bf b}_0)$  &$({\bf h}_{11} {\bf a}_1 + {\bf g}_{11} {\bf b}_2,{\bf h}_{12} {\bf a}_1 + {\bf g}_{12} {\bf b}_2)$\\
$(a_2,a_2')$		& $(b_2,b_2')$		& $({\bf h}_3 {\bf a}_2 + {\bf g}_3 {\bf b}_1,{\bf h}_4 {\bf a}_2 + {\bf g}_4 {\bf b}_1)$		&$({\bf h}_7 {\bf a}_2 + {\bf g}_7 {\bf b}_0,{\bf h}_8 {\bf a}_2 + {\bf g}_8 {\bf b}_0)$ \\
\bottomrule
\end{tabular}
\end{table}

{\bf Correctness and Privacy:} Privacy is easily seen. To prove correctness, note that non-desired symbols can be cancelled and we only need to ensure the received desired equations are invertible to the message symbols. This claim follows from Schwartz-Zippel lemma that shows $({\bf h}_{2i-1}; {\bf h}_{2i}) \in \mathbb{F}_q^{2\times 2}, ({\bf g}_{2i-1}; {\bf g}_{2i}) \in \mathbb{F}_q^{2\times 2}$ have full rank with non-zero probability over a sufficiently large field. Here we have 12 matrices, each of which has dimension $2\times 2$ and has a determinant polynomial of degree at most 2.

Overall, we need to guarantee correctness and MDS criterion are simultaneously satisfied. Take the product of all determinant polynomials, whose degree is at most $12 \times \binom{8}{4} + 12 \times 2$. So we set $q > 12 \times \binom{8}{4} + 12 \times 2$ and by Schwartz-Zippel lemma, there exist a feasible choice of $(h_i, h_i', g_i, g_i')$ over $\mathbb{F}_q$.

\subsubsection{General proof: arbitrary $N\geq 3, m \geq 2$}
We set $L = m(N-1)$ and each message consists of $L$ symbols from $\mathbb{F}_q$, where $q$ is an integer power of a prime number and is no fewer than $2m(N-2)(N-1) + 2m(N-1)\binom{mN}{2m}$. Denote $W^1 = ({\bf a}_0; \cdots; {\bf a}_{N-2}) \in \mathbb{F}_q^{m(N-1) \times 1}$, where ${\bf a}_i = (a_{i,1}; \cdots; a_{i,m})\in \mathbb{F}^{m\times 1}_q, i \in \{0,1,\cdots,N-2\}$ and $W^2 = ({\bf b}_0; \cdots; {\bf b}_{N-2}) \in \mathbb{F}_q^{m(N-1) \times 1}$, where ${\bf b}_i = (b_{i,1}; \cdots; b_{i,m}) \in \mathbb{F}^{m\times 1}_q, i \in \{0,1,\cdots,N-2\}$.

{\bf Storage Code:} Each database stores $\frac{LK}{T} = N-1$ symbols. Denote the $mN$ databases as $DB(1,1), \cdots, DB(1,m), \cdots, DB(N,m)$. The stored variables $S_{n,j} \in \mathbb{F}_{q}^{(N-1) \times 1}, n \in \{1,\cdots,N\}, j \in \{1,\cdots,m\}$ are set as follows.

Denote $\overline{i} = i \mod (N-1)$. 
For any $j \in \{1,\cdots,m\}$,
\begin{eqnarray}
S_{1,j} &=& (S_{1,j,0}; \cdots; S_{1,j,N-2}) = (a_{0,j}; a_{1,j}; \cdots; a_{N-2, j}) \\
S_{2,j} &=& (S_{2,j,0}; \cdots; S_{2,j,N-2})  = (b_{0,j}; b_{1,j}; \cdots; b_{N-2, j}) \\
S_{n,j} &=& (S_{n,j,0}; \cdots; S_{n,j,N-2}) = ({\bf h}_{n,j,0} {\bf a}_{\overline{n-2}} + {\bf g}_{n,j,0} {\bf b}_{0} ; \cdots; {\bf h}_{n,j,N-2} {\bf a}_{\overline{n+N-4}} + {\bf g}_{n,j,N-2} {\bf b}_{N-2}),\nonumber\\
&&\qquad\qquad\qquad\qquad\qquad\qquad\qquad\qquad\qquad\qquad\qquad\qquad\qquad\qquad n \in \{3,\cdots,N\}
\end{eqnarray}
where for any $i \in \{0,\cdots,N-2\}$,
\begin{eqnarray}
{\bf h}_{n,j,i} \in \mathbb{F}_q^{1\times m}, {\bf g}_{n,j,i}  \in \mathbb{F}_q^{1\times m}.
\end{eqnarray}
The proof that there exist choices of ${\bf h}_{n,j,i}, {\bf g}_{n,j,i}$ such that the above storage code satisfies the MDS criterion is deferred to Section \ref{sec:gmdsproof}.

{\bf PIR Code:} When we retrieve $W^1$, we download one symbol from each database and the answers are set as follows. $\mathsf{F}$ is uniform over $\{0,1,\cdots,N-2\}$. When $\mathsf{F} = f \in \{0,1,\cdots, N-2\}$, for any $j \in \{1,2,\cdots,m\}$ we set
\begin{eqnarray}
&& A_{1,j}^{[1]} = S_{1,j,f}  = a_{f,j} \\
&& A_{2,j}^{[1]} = S_{2,j,f} = b_{f,j} \\
&& A_{n,j}^{[1]} =  S_{n,j,f} =  {\bf h}_{n,j,f} {\bf a}_{\overline{f+n-2}} + {\bf g}_{n,j,f} {\bf b}_f, n \in \{3,\cdots,N\}
\end{eqnarray}

When we retrieve $W^2$, the answers are set as follows. $\mathsf{F}$ is uniform over $\{0,1,\cdots,N-2\}$. When $\mathsf{F} = f \in \{0,1,\cdots, N-2\}$, for any $j \in \{1,2,\cdots,m\}$ we set
\begin{eqnarray}
&& A_{1,j}^{[2]} = S_{1,j,f} = a_{f,j} \\
&& A_{2,j}^{[2]} = S_{2,j,f} = b_{f,j} \\
&& A_{n,j}^{[2]} = S_{n,j,\overline{f-(n-2)}} =  {\bf h}_{n,j,\overline{f-(n-2)}}  {\bf a}_{f} + {\bf g}_{n,j,\overline{f-(n-2)}} {\bf b}_{\overline{f-(n-2)}}, n \in \{3,\cdots,N\}
\end{eqnarray}

{\bf Correctness and Privacy:} Privacy is easy to verify. For any $n,j$, $A_{n,j}^{[1]}$ and $A_{n,j}^{[2]}$ are identically distributed due to the modulo operation. Next consider correctness. Due to symmetry, we only need to consider the case when $W^1$ is the desired message. From $A_{2,j}^{[1]}, \forall j \in \{1,\cdots,m\}$, we have obtained all non-desired symbols $(b_{f,1}; \cdots; b_{f,m}) = {\bf b}_f$. After canceling the contribution of ${\bf b}_f$ from $A_{n,j}^{[1]}, n \geq 3$, we need to show that for any $n, f$, the $m \times m$ matrix $({\bf h}_{n,1,f}; \cdots; {\bf h}_{n,m,f})$ have full rank, which follows from Schwartz-Zippel lemma over a sufficiently large field. We have $2(N-2)(N-1)$ such matrices, each of size $m\times m$. The product of all these determinant polynomials has degree at most $2m(N-2)(N-1)$.

\subsubsection{Proof of MDS storage criterion}\label{sec:gmdsproof}
We show that when each element of ${\bf h}_{n,j,i}, {\bf g}_{n,j,i}$ is drawn independently and uniformly from $\mathbb{F}_q$, the probability that the MDS criterion is satisfied is non-zero so that there exists a feasible choice.

Consider any $T = 2m$ databases. We show that there exists an assignment of ${\bf h}_{n,j,i}, {\bf g}_{n,j,i}$ so that the mapping from the storage of the $T$ databases to the $2L$ message symbols is invertible. This shows that the $2L \times 2L$ matrix that describes the linear mapping has a non-zero determinant polynomial. Consider all choices of $\binom{mN}{2m}$ databases and take the product of all such determinant polynomials. Each polynomial has degree at most $2L$ so the degree of the product polynomial is at most $2L\binom{mN}{2m}$. Therefore, over a sufficiently large field, 
by Schwartz-Zippel lemma there exists a choice of ${\bf h}_{n,j,i}, {\bf g}_{n,j,i}$ so that all polynomials evaluate to non-zero values and the storage code is indeed MDS.

We are left to show that for any $T = 2m$ databases, we may assign ${\bf h}_{n,j,i}, {\bf g}_{n,j,i}$ (for a given choice of $T=2m$ databases) so that the storage is able to recover all $2L$ message symbols. The proof is based on a crucial property, stated in the following lemma. Define $\vec{a}_{j} = (a_{0,j}; a_{1,j}; \cdots; a_{N-2, j}), \vec{b}_{j} = (b_{0,j}; b_{1,j}; \cdots; b_{N-2, j}), j \in \{1,\cdots,m\}$.

\begin{Lemma}\label{lemma:any}
Consider any $n \in \{3,\cdots,N\}, j \in \{1,\cdots,m\}$, there exists a choice of ${\bf h}_{n,j,i}, {\bf g}_{n,j,i}, i \in \{0,1,\cdots,N-2\}$ so that from $S_{n,j}$, we may obtain $\vec{a}_{j^*}$ for any $j^* \in \{1,\cdots,m\}$ and another choice of ${\bf h}_{n,j,i}, {\bf g}_{n,j,i}, i \in \{0,1,\cdots,N-2\}$ so that from $S_{n,j}$, we may obtain $\vec{b}_{j^*}$ for any $j^* \in \{1,\cdots,m\}$.
\end{Lemma}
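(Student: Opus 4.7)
The plan is to prove the lemma by giving a fully explicit witness assignment for each of the two claims, observing that the lemma is purely an existence statement about what a single block $S_{n,j}$ can be tuned to reveal. The key structural observation is that each of the $N-1$ scalar entries
\[
S_{n,j,i} = {\bf h}_{n,j,i}\, {\bf a}_{\overline{n+i-2}} + {\bf g}_{n,j,i}\, {\bf b}_{i}
\]
is governed by its own pair of coefficient row vectors $({\bf h}_{n,j,i}, {\bf g}_{n,j,i}) \in \mathbb{F}_q^{1\times m} \times \mathbb{F}_q^{1\times m}$, and these pairs can be chosen independently across $i$. Since $S_{n,j}$ has $N-1$ entries, matching the length of $\vec{a}_{j^*}$ (and of $\vec{b}_{j^*}$), there is exactly enough freedom to isolate any single target stripe.

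For the first claim, fix an arbitrary $j^* \in \{1,\ldots,m\}$ and set, for every $i \in \{0,\ldots,N-2\}$,
\[
{\bf h}_{n,j,i} = {\bf e}_{j^*}, \qquad {\bf g}_{n,j,i} = {\bf 0},
\]
where ${\bf e}_{j^*} \in \mathbb{F}_q^{1\times m}$ is the standard basis row vector with a $1$ in position $j^*$ and zeros elsewhere. Then $S_{n,j,i} = a_{\overline{n+i-2},\, j^*}$, and because the index map $i \mapsto \overline{n+i-2} = (n+i-2) \bmod (N-1)$ is a cyclic permutation of $\{0,1,\ldots,N-2\}$, the list $S_{n,j,0}, S_{n,j,1}, \ldots, S_{n,j,N-2}$ enumerates every component of $\vec{a}_{j^*}$ in some known order. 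A fixed reordering of the entries of $S_{n,j}$ therefore reconstructs $\vec{a}_{j^*}$.

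The second claim is entirely symmetric: swap the roles of the two coefficient vectors by setting ${\bf h}_{n,j,i} = {\bf 0}$ and ${\bf g}_{n,j,i} = {\bf e}_{j^*}$ for every $i$. Then $S_{n,j,i} = b_{i,\, j^*}$, which gives $\vec{b}_{j^*}$ directly with the natural ordering and no permutation needed. Note that this witness assignment is genuinely different from (and disjoint in support with) the one used for $\vec{a}_{j^*}$, which is consistent with the lemma asserting the existence of \emph{another} choice for the second recovery.

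I do not anticipate any real obstacle here; the lemma is essentially the statement that the $2m$ degrees of freedom in $({\bf h}_{n,j,i}, {\bf g}_{n,j,i})$ at each coordinate $i$ are rich enough to perform a single-stripe projection onto either message, and the constructions above make this concrete. The one conceptual point worth flagging is that these specific witnesses are never the coefficients actually deployed in the final storage code; their sole role is to certify that certain determinant polynomials arising in the surrounding Schwartz--Zippel argument of Section~\ref{sec:gmdsproof} are not identically zero, so that a generic random choice of coefficients from a sufficiently large field succeeds for every $2m$-subset of databases simultaneously.
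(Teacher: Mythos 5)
Your proof is correct and takes essentially the same approach as the paper: both pick ${\bf h}_{n,j,i}={\bf e}_{j^*},\ {\bf g}_{n,j,i}={\bf 0}$ (and the symmetric swap for $\vec{b}_{j^*}$) and observe that the resulting $S_{n,j}$ is a cyclic shift of the target vector, hence recoverable by reindexing. Your remarks about the cyclic index map and the lemma's role as a non-vanishing witness for the Schwartz--Zippel argument are accurate and match the paper's usage.
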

\begin{proof}[Proof of Lemma \ref{lemma:any}] The proof is fairly simple because $S_{n,j}$ contains all symbols from $\vec{a}_{j^*}$ and $\vec{b}_{j^*}$ for any $j^*$.
Consider first $S_{n,j} = \vec{a}_{j^*}$. For all $i \in \{0,1,\cdots,N-2\}$, set
\begin{eqnarray}
{\bf g}_{n,j,i} &=& {\bf 0}\\
{\bf h}_{n,j,i} &=& {\bf e}_{j^*} 
\end{eqnarray}
where ${\bf e}_{j^*}$ is a $1\times m$ unit vector so that only the element of the $j^*$-th position is 1 and all other elements are 0,
then we have
\begin{eqnarray}
S_{n,j} &=& (S_{n,j,0}; \cdots; S_{n,j,N-2}) = ({\bf h}_{n,j,0} {\bf a}_{\overline{n-2}} + {\bf g}_{n,j,0} {\bf b}_{0} ; \cdots; {\bf h}_{n,j,N-2} {\bf a}_{\overline{n+N-4}} + {\bf g}_{n,j,N-2} {\bf b}_{N-2}) \notag\\
\\
&=&  ({\bf h}_{n,j,0} {\bf a}_{\overline{n-2}} ; \cdots; {\bf h}_{n,j,N-2} {\bf a}_{\overline{n+N-4}} ) \\
&=&  ({a}_{\overline{n-2},j^*}; \cdots; {a}_{\overline{n+N-4}, j^*} )
\end{eqnarray}
which is a cyclic shift of $\vec{a}_{j^*} = (a_{0,j^*}; a_{1,j^*}; \cdots; a_{N-2, j^*})$.

The case of $S_{n,j} = \vec{b}_{j^*}$ follows similarly from the assignment given above.
\end{proof}

Fix any $T = 2m$ databases. Suppose $T_1 \leq 2m$ databases are from $DB(i,j)$ where $i \in \{1,2\}, j \in \{1,\cdots,m\}$ and they will contribute $T_1$ distinct $\vec{a}_{j_1^*}$ and $\vec{b}_{j_2^*}$ vectors. The remaining $T - T_1$ databases are from $DB(n,j)$ where $n \in \{3,\cdots,N\}, j \in \{1,\cdots,m\}$ and our goal is to recover all remaining $T-T_1$ $\vec{a}_{j_3^*}$ and $\vec{b}_{j_4^*}$ vectors. 
{\color{black} We can identify a one-to-one mapping between the $T-T_1$ databases and the remaining $(T-T_1)$ $\vec{a}_{j_3^*}$ and $\vec{b}_{j_4^*}$ vectors, and apply Lemma \ref{lemma:any} to find the assignment such that the $\vec{a}_{j_3^*}$ and $\vec{b}_{j_4^*}$ vectors are fully recovered.} 
Hence from any $T$ databases, we may recover $(\vec{a}_1,\cdots,\vec{a}_{m})$ and $(\vec{b}_1,\cdots,\vec{b}_{m})$, i.e., all symbols from $W^1$ and $W^2$. Therefore, there indeed exists a choice of ${\bf h}_{n,j,i}, {\bf g}_{n,j,i}$ for which the determinant polynomial is not zero. 

Finally, we need to consider correctness and MDS criterion jointly and show that there exist a single choice of ${\bf h}_{n,j,i}, {\bf g}_{n,j,i}$ that satisfies both constraints at the same time. The product of all determinant polynomials has degree at most $2m(N-2)(N-1) + 2m(N-1)\binom{mN}{2m}$ and as $q > 2m(N-2)(N-1) + 2m(N-1)\binom{mN}{2m}$, Schwartz-Zippel lemma guarantees the existence of a feasible choice.

\section{Conclusion}
\label{sec:conclusion}

We considered the problem of private information retrieval from MDS-coded databases. Different from the prevailing approach in the literature where the messages are encoded separately using MDS codes, we consider encoding and storing the messages jointly using an MDS code into the databases. There are many cases for which by jointly MDS-coding, we can break the capacity barrier of the separate coding MDS-PIR. To establish this result, two novel code constructions and the corresponding PIR protocols are presented, and moreover, an expansion technique is introduced to allow more general  parameters. {\color{black} The capacity of PIR with joint MDS storage, especially the converse side, remains an interesting future direction.}

\let\url\nolinkurl
\bibliographystyle{IEEEtran}
\bibliography{Thesis,PIR}

\begin{thebibliography}{10}
\providecommand{\url}[1]{#1}
\csname url@samestyle\endcsname
\providecommand{\newblock}{\relax}
\providecommand{\bibinfo}[2]{#2}
\providecommand{\BIBentrySTDinterwordspacing}{\spaceskip=0pt\relax}
\providecommand{\BIBentryALTinterwordstretchfactor}{4}
\providecommand{\BIBentryALTinterwordspacing}{\spaceskip=\fontdimen2\font plus
\BIBentryALTinterwordstretchfactor\fontdimen3\font minus
  \fontdimen4\font\relax}
\providecommand{\BIBforeignlanguage}[2]{{%
\expandafter\ifx\csname l@#1\endcsname\relax
\typeout{** WARNING: IEEEtran.bst: No hyphenation pattern has been}%
\typeout{** loaded for the language `#1'. Using the pattern for}%
\typeout{** the default language instead.}%
\else
\language=\csname l@#1\endcsname
\fi
#2}}
\providecommand{\BIBdecl}{\relax}
\BIBdecl

\bibitem{PIRfirstjournal}
B.~Chor, E.~Kushilevitz, O.~Goldreich, and M.~Sudan, ``{Private Information
  Retrieval},'' \emph{Journal of the ACM (JACM)}, vol.~45, no.~6, pp. 965--981,
  1998.

\bibitem{sun2016capacity}
H.~Sun and S.~A. Jafar, ``The capacity of private information retrieval,'' in
  \emph{Global Communications Conference (GLOBECOM), 2016 IEEE}.\hskip 1em plus
  0.5em minus 0.4em\relax IEEE, 2016, pp. 1--6.

\bibitem{shah2014one}
N.~B. Shah, K.~Rashmi, and K.~Ramchandran, ``One extra bit of download ensures
  perfectly private information retrieval,'' in \emph{Information Theory
  (ISIT), 2014 IEEE International Symposium on}.\hskip 1em plus 0.5em minus
  0.4em\relax IEEE, 2014, pp. 856--860.

\bibitem{freij2017private}
R.~Freij-Hollanti, O.~W. Gnilke, C.~Hollanti, and D.~A. Karpuk, ``Private
  information retrieval from coded databases with colluding servers,''
  \emph{SIAM Journal on Applied Algebra and Geometry}, vol.~1, no.~1, pp.
  647--664, 2017.

\bibitem{banawan2018capacity}
K.~Banawan and S.~Ulukus, ``The capacity of private information retrieval from
  coded databases,'' \emph{IEEE Transactions on Information Theory}, vol.~64,
  no.~3, pp. 1945--1956, 2018.

\bibitem{Tajeddine_Rouayheb}
R.~Tajeddine and S.~E. Rouayheb, ``{Private Information Retrieval from MDS
  Coded Data in Distributed Storage Systems},'' \emph{arXiv preprint
  arXiv:1602.01458}, 2016.

\bibitem{jingke2017subScienceChina}
J.~Xu and Z.~Zhang, ``On sub-packetization and access number of
  capacity-achieving {PIR} schemes for {MDS} coded non-colluding databases,''
  \emph{SCIENCE CHINA Information Sciences}, vol.~61, no.~7, pp.
  100\,306:1–--100\,306:16, 2018.

\bibitem{kumar2019achieving}
S.~Kumar, H.-Y. Lin, E.~Rosnes, and A.~G. i~Amat, ``Achieving maximum distance
  separable private information retrieval capacity with linear codes,''
  \emph{IEEE Transactions on Information Theory}, 2019.

\bibitem{attia2018capacity}
M.~A. Attia, D.~Kumar, and R.~Tandon, ``The capacity of private information
  retrieval from uncoded storage constrained databases,'' \emph{arXiv preprint
  arXiv:1805.04104}, 2018.

\bibitem{woolsey2019optimal}
N.~Woolsey, R.-R. Chen, and M.~Ji, ``An optimal iterative placement algorithm
  for pir from heterogeneous storage-constrained databases,'' \emph{arXiv
  preprint arXiv:1904.02131}, 2019.

\bibitem{banawan2019capacity}
K.~Banawan, B.~Arasli, Y.-P. Wei, and S.~Ulukus, ``The capacity of private
  information retrieval from heterogeneous uncoded caching databases,''
  \emph{arXiv preprint arXiv:1902.09512}, 2019.

\bibitem{raviv2018private}
N.~Raviv and I.~Tamot, ``Private information retrieval in graph based
  replication systems,'' in \emph{2018 IEEE International Symposium on
  Information Theory (ISIT)}.\hskip 1em plus 0.5em minus 0.4em\relax IEEE,
  2018, pp. 1739--1743.

\bibitem{lin2018fundamental}
H.-Y. Lin, S.~Kumar, E.~Rosnes, and A.~G. i~Amat, ``On the fundamental limit of
  private information retrieval for coded distributed storage,'' \emph{arXiv
  preprint arXiv:1808.09018}, 2018.

\bibitem{Fazeli_Vardy_Yaakobi}
A.~Fazeli, A.~Vardy, and E.~Yaakobi, ``{Codes for distributed {PIR} with low
  storage overhead},'' in \emph{Proceedings of IEEE International Symposium on
  Information Theory (ISIT)}, 2015, pp. 2852--2856.

\bibitem{Rao_Vardy}
S.~Rao and A.~Vardy, ``{Lower Bound on the Redundancy of PIR Codes},''
  \emph{arXiv preprint arXiv:1605.01869}, 2016.

\bibitem{blackburn2019pir}
S.~R. Blackburn and T.~Etzion, ``Pir array codes with optimal virtual server
  rate,'' \emph{IEEE Transactions on Information Theory}, 2019.

\bibitem{zhang2019private}
Y.~Zhang, X.~Wang, H.~Wei, and G.~Ge, ``On private information retrieval array
  codes,'' \emph{IEEE Transactions on Information Theory}, 2019.

\bibitem{skachek2018batch}
V.~Skachek, ``Batch and pir codes and their connections to locally repairable
  codes,'' in \emph{Network Coding and Subspace Designs}.\hskip 1em plus 0.5em
  minus 0.4em\relax Springer, 2018, pp. 427--442.

\bibitem{vajha2017binary}
M.~Vajha, V.~Ramkumar, and P.~V. Kumar, ``Binary, shortened projective reed
  muller codes for coded private information retrieval,'' in \emph{2017 IEEE
  International Symposium on Information Theory (ISIT)}.\hskip 1em plus 0.5em
  minus 0.4em\relax IEEE, 2017, pp. 2648--2652.

\bibitem{chan2015private}
T.~H. Chan, S.-W. Ho, and H.~Yamamoto, ``Private information retrieval for
  coded storage,'' in \emph{2015 IEEE International Symposium on Information
  Theory (ISIT)}.\hskip 1em plus 0.5em minus 0.4em\relax IEEE, 2015, pp.
  2842--2846.

\bibitem{sun2018multiround}
H.~Sun and S.~A. Jafar, ``Multiround private information retrieval: Capacity
  and storage overhead,'' \emph{IEEE Transactions on Information Theory},
  vol.~64, no.~8, pp. 5743--5754, 2018.

\bibitem{tian2018shannon}
C.~Tian, H.~Sun, and J.~Chen, ``A {Shannon-Theoretic} approach to the
  storage-retrieval tradeoff in pir systems,'' in \emph{2018 IEEE International
  Symposium on Information Theory (ISIT)}.\hskip 1em plus 0.5em minus
  0.4em\relax IEEE, 2018, pp. 1904--1908.

\bibitem{tajeddine2018private}
R.~Tajeddine, O.~W. Gnilke, and S.~El~Rouayheb, ``Private information retrieval
  from mds coded data in distributed storage systems,'' \emph{IEEE Transactions
  on Information Theory}, 2018.

\bibitem{Sun_Jafar_MDSTPIR}
H.~Sun and S.~A. Jafar, ``Private information retrieval from {MDS} coded data
  with colluding servers: Settling a conjecture by {Freij-Hollanti et al.}''
  \emph{IEEE Transactions on Information Theory}, vol.~64, no.~2, pp.
  1000--1022, Feb. 2018.

\bibitem{Wang_Skoglund}
Q.~Wang and M.~Skoglund, ``Symmetric private information retrieval for mds
  coded distributed storage,'' in \emph{Communications (ICC), 2017 IEEE
  International Conference on}.\hskip 1em plus 0.5em minus 0.4em\relax IEEE,
  2017, pp. 1--6.

\bibitem{Ruida2019ISIT}
R.~Zhou, C.~Tian, and H.~Sun, ``Capacity-achieving private information
  retrieval codes from {MDS}-coded databases with minimum message size,'' in
  \emph{2019 IEEE International Symposium on Information Theory (ISIT)}.\hskip
  1em plus 0.5em minus 0.4em\relax IEEE, 2019, pp. 370--374.

\bibitem{ingleton1956rank}
A.~W. Ingleton, ``The rank of circulant matrices,'' \emph{Journal of the London
  Mathematical Society}, vol.~1, no.~4, pp. 445--460, 1956.

\bibitem{lidl1994introduction}
R.~Lidl and H.~Niederreiter, \emph{Introduction to finite fields and their
  applications}.\hskip 1em plus 0.5em minus 0.4em\relax Cambridge university
  press, 1994.

\bibitem{guan1986exact}
P.-h. Guan and Y.~He, ``Exact results for deterministic cellular automata with
  additive rules,'' \emph{Journal of Statistical Physics}, vol.~43, no. 3-4,
  pp. 463--478, 1986.

\end{thebibliography}
\end{document}